\documentclass[journal]{IEEEtran}
\usepackage{amsmath,amssymb}
\usepackage{booktabs}
\usepackage{graphicx}
\usepackage{algorithm}
\usepackage{algpseudocode}
\usepackage{url}
\usepackage[hidelinks]{hyperref}
\usepackage{xcolor}
\newtheorem{observation}{Observation}
\newtheorem{definition}{Definition}
\newtheorem{proposition}{Proposition}

\begin{document}

\title{Coverage Is Not Containment: A Fundamental Limit of Admission-Time Defenses
Against Coordinated Poisoning of Vector Retrieval}

\author{Prashant~Kumar~Pathak~and~Tarun~Kumar~Sharma%
\thanks{P.~K.~Pathak is with Santa Clara, CA, USA (e-mail: prashant.pathak@ieee.org).}%
\thanks{T.~K.~Sharma is with Colonia, NJ, USA (e-mail: tarun.sharma@ieee.org).}}

\markboth{IEEE Transactions on Information Forensics and Security}%
{Pathak and Sharma: Coverage Is Not Containment}

\maketitle

\begin{abstract}
Retrieval-augmented generation (RAG) answers a question by retrieving passages from a vector store and
trusting them as context, so anyone who can add documents can try to steer the answer. A recent, appealing
defense filters poisoning at ingestion, rejecting any document that behaves like a \emph{hub}. We show
it---and \emph{every} ingestion-time filter---is defeated by a coordinated adversary that injects a handful
of \emph{individually unremarkable} documents which together surround one target query and seize its
top-$k$ (on BGE-large / BEIR, $m{=}10$ documents take $10/10$; $9.9/10$ on a live HNSW index).

The attack is not theoretical. Realized as ordinary \emph{fluent} text and run end-to-end through a
BGE-large\,$+$\,HNSW\,$+$\,Qwen2.5-7B pipeline, it makes the generator emit the attacker's planted claim in
$\mathbf{88\%}$ of targets, versus $0\%$ without the injection. And no admission-time defense stops it: at
ingestion an attack cone is geometrically identical to a legitimate niche upload, so---measuring this
directly---the strongest trained classifier, given every feature and thousands of examples, separates the
two \emph{no better than chance}, catching $\mathbf{4.2\%}$ of attacks at a $1\%$ false-positive rate. We
prove this limit for the entire class of ingestion-time statistics (any decision from documents and
reference queries alone), and it reproduces---and worsens---across two corpora and five encoders. The one
signal that separates an attack from legitimate niche ingestion---a query's \emph{demand}---is invisible
before retrieval, which is also the escape: a retrieval-time detector that observes demand catches
$\mathbf{100\%}$ of the attacks at the same $1\%$ false-positive rate. Coverage of the query space by an
admission gate is not containment of coordinated poisoning; robust defense must move past the front door,
to demand.
\end{abstract}

\begin{IEEEkeywords}
Retrieval-augmented generation, vector database security, corpus poisoning, hubness, admission control,
adaptive adversary, embedding anisotropy.
\end{IEEEkeywords}

\section{Introduction}
Retrieval-augmented generation (RAG)~\cite{lewis2020rag,guu2020realm} has become the dominant way to ground
large language models (LLMs) in external knowledge: a user query is embedded, the nearest documents in a
vector store are retrieved, and those passages are placed in the model's context as trusted evidence. This
architecture makes the vector store a \emph{security boundary}. A document that is retrieved for a query
can steer the model's answer to that query---through indirect prompt injection, misinformation, or biased
evidence~\cite{greshake2023injection,zou2024poisonedrag,zhong2023corpuspoison}. An adversary who can insert
documents into the store therefore has a powerful lever over downstream generations.

\emph{Hubness}---the well-documented tendency, in high-dimensional spaces, for a few points to appear in
the $k$-nearest-neighbour lists of disproportionately many queries~\cite{radovanovic2010hubs}---sharpens
this lever. A single crafted \emph{hub} document can be retrieved across many unrelated queries, so one
injected record influences a large fraction of interactions~\cite{zhang2024advhub}. The natural defense is
to detect and remove such hubs. Detection after ingestion, however, leaves an exposure window between a
hub's insertion and the next scan, and pays the cost of repeated corpus-wide rescans.

A recent line of work~\cite{cose} moves the control to \emph{admission}. It maintains a set of sentinel
queries and, for each candidate document, computes its reverse-$k$NN count $\kappa_S$ against the
sentinels; a document is rejected if $\kappa_S$ reaches a threshold~$\tau$ calibrated to a small benign
false-positive rate, so a hub never enters the index. That work shows, perhaps surprisingly, that a single
\emph{global} threshold suffices---domain-aware (per-topic) refinement adds nothing---and explains it
geometrically: sentence embeddings are \emph{anisotropic}, occupying a narrow cone, so a document that is
hub-like within a topic is hub-like globally. Crucially, it evaluates a fixed set of \emph{non-adaptive}
attacks and explicitly scopes \emph{targeted} and \emph{coordinated} attacks out of scope.

\textbf{This paper takes up exactly that scoped-out threat, and finds a fundamental limit.} We ask two
questions. \emph{(i)~Attack:} can a coordinated adversary poison a chosen \emph{target} query while every
injected document is individually admissible? \emph{(ii)~Defense:} can any \emph{ingestion-time}
control---per-document or collective, seeing only documents and sentinels---stop it at an acceptable cost?
Our answers are yes, and no (Proposition~\ref{prop:limit}).

The key idea is that the per-document gate reasons about documents \emph{one at a time}. A hub is caught
because it is loud on its own. But an adversary need not build a hub: it can inject many documents that are
each individually quiet---each retrieved by too few sentinels to be rejected---yet that \emph{together}
dominate one target query's retrieval. By the very anisotropy that makes the global gate work, the
documents that achieve this must live near \emph{peripheral} queries---queries poorly covered by the
established sentinels---which is precisely the tight-domain residual the gate cannot close. When the
defender responds with a \emph{collective} statistic that looks for coordinated bursts, the adversary tunes
its attack into a regime that is geometrically indistinguishable from legitimate bulk ingestion of related
documents. This indistinguishability, which we both \emph{measure}---the strongest trained classifier
separates the attack from a location-matched legitimate upload no better than chance
(Fig.~\ref{fig:clf})---and formalize (Proposition~\ref{prop:limit}), is the crux of a limit that persists at
every practical false-positive rate (Fig.~\ref{fig:frontier}) and that no ingestion-time defense in a
natural class escapes.

\noindent\textbf{Contributions.}
\begin{itemize}
\item \textbf{A coordinated attack that reaches the output (\S\ref{sec:attack}).} We formalize the
coordinated adversary and show $m$ individually-admissible documents in a tight cone around a target query
seize its top-$k$ linearly in $m$ ($m{=}10$ seizes $10/10$ on BGE-large / BEIR; $9.9/10$ on a live HNSW
index), give a feasibility law tying attackability to query \emph{centrality}, and realize it as
\emph{fluent} text that evades a perplexity filter. End-to-end through a real RAG pipeline
(BGE\,$+$\,HNSW\,$+$\,Qwen2.5-7B) it flips the generated answer to the attacker's planted claim in
$\mathbf{88\%}$ of targets, versus $0\%$ clean.
\item \textbf{A measured fundamental limit (\S\ref{sec:defense},~\S\ref{sec:limit}).} Two collective
defenses and the adaptive game leave a $4.5/10$ covert residual that persists at \emph{every} achievable
false-positive rate. We prove (Proposition~\ref{prop:limit}) this holds for the entire class of
ingestion-time statistics, and---the key evidence---\emph{measure} it: the strongest trained classifier,
given every feature, separates the attack from a location-matched legitimate upload \emph{no better than
chance} ($\mathbf{4.2\%}$ recall at $1\%$ FPR). It reproduces across two corpora and five encoders.
\item \textbf{A constructive escape, and systems (\S\ref{sec:systems},~\S\ref{sec:discuss}).} Because the
distinguishing signal is retrieval-time \emph{demand}, a detector that observes it catches $\mathbf{100\%}$
of the attacks at $1\%$ FPR---where the best admission-time detector catches $4.2\%$. Sub-points: the
collective defense costs $\sim\!10\%$ of insert latency, and a per-shard view is blind to a burst split
across shards (motivating global consistency).
\end{itemize}

\section{Background and Related Work}\label{sec:related}
\textbf{RAG security and corpus poisoning.} Because retrieved passages are trusted, poisoning the retrieval
corpus is a direct attack on RAG. PoisonedRAG~\cite{zou2024poisonedrag} and corpus-poisoning
attacks~\cite{zhong2023corpuspoison,zou2024poisonedrag} craft passages that are retrieved for target
questions and steer generation; backdoor and trigger attacks plant passages activated by specific
queries~\cite{phantom,badrag}; and indirect prompt injection~\cite{greshake2023injection} weaponizes
retrieved content. These build on the broader data-poisoning
lineage~\cite{biggio2012poisoning,steinhardt2017certified,shafahi2018poison}. Two things distinguish our
setting. First, these works typically target \emph{specific} question--answer pairs or optimize a small
number of passages against an \emph{undefended} retriever, whereas we study a defense-aware adversary that
must remain admissible under an explicit ingestion-time gate and seeks to \emph{dominate}---not merely
enter---a target query's top-$k$. Second, our construction is closer in spirit to a Sybil
attack~\cite{douceur2002sybil}, in which many individually-weak entities combine, than to a single strong
poison. Adversarial attacks on neural retrieval and ranking craft passages that are retrieved or ranked
highly~\cite{song2020collisions,liu2022orderdisorder}; we differ again in operating under an admission gate.

\textbf{Defenses against retrieval poisoning.} Defenses span the RAG pipeline, and it is worth situating our
limit against each stage. (i)~\emph{Ingestion-time filtering}---the admission gate~\cite{cose} we study, and
more generally any anomaly test applied as documents arrive---is the earliest and cheapest place to act;
our results show this stage cannot succeed against coordinated poisoning at any acceptable false-positive
rate. (ii)~\emph{Retrieval-time} defenses inspect a query's retrieved set: robust aggregation isolates each
passage and combines per-passage answers so a \emph{minority} of poisoned passages cannot dominate, with
certifiable guarantees in RobustRAG~\cite{xiang2024robustrag}, which inherits from certified poisoning
defenses based on bagging and partition aggregation~\cite{jia2021bagging,levine2021dpa} and randomized
smoothing against label flips and backdoors~\cite{rosenfeld2020labelflip,weber2023rab}. These bound the
\emph{influence} of poisoned passages once retrieved but rest on a minority assumption that our coordinated
attack violates by design---it takes \emph{all} $10/10$ retrieved slots, not a minority---so aggregation
alone is not a containment; a complementary \emph{detection} signal is needed, which we supply as
retrieval-time \emph{demand} (\S\ref{sec:discuss}). (iii)~\emph{Provenance and source trust} admit
converging documents only from vetted sources, shifting the problem from geometry to identity; this escapes
our limit (it is not a function of documents and sentinels alone) at the cost of an identity/attestation
infrastructure the open ingestion channels above typically lack. (iv)~\emph{Answer-time} corroboration
cross-checks the generated answer against diverse evidence. Our contribution is to prove that the first,
most attractive stage is a dead end for coordinated poisoning, and to give the first quantitative wedge
between an ingestion-blind detector and a demand-aware one ($4.2\%$ vs.\ $100\%$ recall, \S\ref{sec:discuss}).

\textbf{Hubness.} Hubness in high-dimensional nearest-neighbour search is long
studied~\cite{radovanovic2010hubs}, with reduction methods~\cite{schnitzer2012local} and, more recently,
adversarial exploitation: adversarial hubs~\cite{zhang2024advhub} craft a single multi-modal document
retrieved for many queries. The admission gate we attack~\cite{cose} is designed exactly against such
broad hubs.

\textbf{Admission-time control and its geometry.} This paper is the adversarial counterpart to our
admission-time defense~\cite{cose}: where~\cite{cose} shows a single global gate suffices against \emph{broad
hubs}, we show that it---and the entire class of ingestion-time defenses---is defeated by \emph{coordinated}
poisoning. Concretely, the gate of~\cite{cose} rejects hubs at ingestion
via a reverse-$k$NN count against sentinel queries and shows a single global threshold suffices, attributing
this to embedding \emph{anisotropy}: dense sentence embeddings occupy a narrow
cone~\cite{ethayarajh2019anisotropy,gao2019degeneration,mu2018allbut}, coupling topic-local and global
visibility. We show the same anisotropy that makes the gate sufficient against hubs makes it \emph{fail}
against coordinated targeted poisoning.

\textbf{Adaptive adversaries.} A recurring lesson in security ML is that defenses must be evaluated against
adaptive attacks~\cite{carlini2017towards,athalye2018obfuscated,tramer2020adaptive}; static evaluations
overstate robustness. Our contribution is to carry this discipline into admission-time RAG defenses and to
show that adaptivity is not merely a stronger attack but reveals a geometric limit.

\textbf{Encoders and infrastructure.} Dense retrieval~\cite{karpukhin2020dpr,khattab2020colbert,
izacard2022contriever} and general-purpose text embeddings~\cite{xiao2023bge,wang2022e5,li2023gte,
reimers2019sbert}, benchmarked by MTEB~\cite{muennighoff2023mteb} and BEIR~\cite{thakur2021beir}, underpin
RAG; production stores index them with approximate-nearest-neighbour structures~\cite{malkov2018hnsw} in
systems such as FAISS~\cite{johnson2021faiss} and Milvus~\cite{wang2021milvus}. We use
BGE-large~\cite{xiao2023bge} as the primary encoder and HotFlip~\cite{ebrahimi2018hotflip} for text
realizability.

\section{Threat Model and Problem Formulation}\label{sec:threat}
\textbf{System.} A vector store holds unit-normalised embeddings $E(d)\in\mathbb{S}^{D-1}$ of documents
under a fixed encoder $E$. A query $q$ retrieves the top-$k$ documents by cosine similarity
$\langle E(d),q\rangle$. Let $s_k(q)$ denote the $k$-th largest similarity of any corpus document to $q$
(the ``bar'' to enter $q$'s top-$k$).

\begin{definition}[Admission gate]
The gate maintains sentinels $S=\{q_1,\dots,q_n\}$ with per-sentinel thresholds $\tau_i$ (the $k$-th-NN
similarity of $q_i$ over the clean corpus). For a candidate $d$ it computes the reverse-$k$NN count
$\kappa_S(d)=|\{i:\langle E(d),q_i\rangle>\tau_i\}|$ and the hub rate $h(d)=\kappa_S(d)/n$. It \emph{admits}
$d$ iff $h(d)<\theta$, where $\theta$ is calibrated so that the benign false-positive rate
$\Pr_{d\sim\mathcal{B}}[h(d)\ge\theta]=\phi$ (we use $\phi=1\%$).
\end{definition}

\textbf{Adversary.} White-box, consistent with~\cite{cose}: it knows $E$, the gate, $\theta$, and the
sentinels~$S$. It injects a set $A=\{d_1,\dots,d_m\}$ of documents through the ingestion path. Its budget
is the number of documents $m$ and a per-document amplitude. Given a target query $q^\ast$, its
\textbf{objective} is to maximize the number of \emph{seized slots}
\[
J(A,q^\ast)=\bigl|\{i: \langle E(d_i),q^\ast\rangle> s_k(q^\ast)\}\bigr|,
\]
i.e.\ how many of $q^\ast$'s top-$k$ results are attacker documents, \textbf{subject to} every document
being admitted: $h(d_i)<\theta$ for all $i$. We call a document \emph{covert} if it is admitted and seizes
a slot; the attack succeeds if it seizes a large fraction of $k$.

\textbf{Centrality.} Let $\mu$ be the (normalised) mean query direction. The \emph{centrality} of a query
is $c(q)=\langle q,\mu\rangle$: high for queries aligned with the bulk of the workload, low for
\emph{peripheral} queries in sparsely-covered directions.

\textbf{Threat realism and cost.} The write-access assumption is mild and matches deployed RAG: production
stores ingest continuously from partially-open channels---public wikis and forums, crawled web pages, user
uploads, customer-support tickets, and collaborative knowledge bases---so an adversary who can post to any
indexed source injects documents without compromising the store. The cost is small and, crucially,
\emph{independent of corpus size}: seizing $q^\ast$'s entire top-$k$ needs only $m\!\approx\!k$ short
passages (\S\ref{ssec:frontier}; ten for $k{=}10$), and the \emph{same} $m$ documents suffice as the corpus
grows, because both admission and seizure are local to $q^\ast$'s neighbourhood ($h(d)$ depends on the
sentinels, not $N$; $s_k(q^\ast)$ is a local density). Coordinated insertion is therefore realistic rather
than exotic: unlike a single conspicuous hub, the $m$ documents are individually unremarkable and can be
introduced gradually, from distinct accounts or sources, defeating rate-limiting and burst heuristics---the
Sybil structure of the attack~\cite{douceur2002sybil}. We assume the store applies the admission gate but
no per-source \emph{identity} or \emph{provenance} check (exactly the class-leaving defenses of
\S\ref{sec:discuss}); text-level moderation such as a fluency/perplexity filter lies outside the gate's
class $\mathcal{D}$ and, as \S\ref{sec:limit} shows, does not help, because the attack realizes as fluent
natural-language text. Finally, we grant the adversary white-box knowledge (below): the limit we prove is a
property of the ingestion channel, so a \emph{weaker} adversary faces only a harder version of the same
task, and our claims are conservative.

\textbf{Evaluation setup.} Following~\cite{cose}, our primary configuration is BGE-large-en-v1.5
($D{=}1024$), a $100{,}000$-document corpus assembled from four BEIR collections
(FiQA/TREC-COVID/SciFact/NFCorpus) with $10{,}200$ grounded queries and $n{=}5{,}570$ sentinels; $k{=}10$;
$\theta$ frozen at $1\%$ FPR on a disjoint $5{,}000$-document benign set. \S\ref{sec:limit} adds a second,
compositionally distinct general-web corpus. Headline results are averaged over five random seeds (which
re-draw the sampled targets, the benign calibration batches, and the cone perturbations); we report
$95\%$ confidence intervals (Student-$t$).

\section{The Coordinated Poisoning Attack}\label{sec:attack}
\subsection{Single-document feasibility and its geometry}
The most aggressive single poison for $q^\ast$ is a document $d\approx q^\ast$: it is maximally similar to
$q^\ast$ (hence its top-$1$ result) and requires no optimization. Whether the gate admits it is governed by
geometry.

\begin{observation}[Feasibility law]
For $d=q^\ast$, the hub rate $h(d)$ equals the fraction of sentinels within $q^\ast$'s retrieval
neighbourhood. Under anisotropy, this fraction grows with centrality $c(q^\ast)$: a central query is close
to many sentinels, so $d=q^\ast$ is loud and caught; a peripheral query is close to few, so $d=q^\ast$ is
quiet and \emph{admitted}. Targeted poisoning is therefore feasible exactly where the gate's coverage is
weakest.
\end{observation}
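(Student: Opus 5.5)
The first claim is essentially definitional, so I will unfold it and make the second claim quantitative. Substituting $E(d)=q^\ast$ into the gate definition gives
\[
\kappa_S(q^\ast)=\bigl|\{i:\langle q^\ast,q_i\rangle>\tau_i\}\bigr|.
\]
The condition $\langle q^\ast,q_i\rangle>\tau_i$ says that $q^\ast$, placed as a document, would enter sentinel $q_i$'s clean top-$k$. I will call the set of such sentinels $q^\ast$'s \emph{retrieval neighbourhood} $N(q^\ast)$. Then $h(q^\ast)=|N(q^\ast)|/n$ holds exactly, with no approximation. One point needs care: this is the reverse neighbourhood (sentinels whose bar $q^\ast$ clears), not the forward one. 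I will say so explicitly, noting that the two coincide up to local density variation because $\tau_i$ is itself a $k$-NN radius.

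For monotonicity in $c(q^\ast)$, I plan a stylised model of anisotropy. Each sentinel is written as $q_i=\alpha\mu+\sqrt{1-\alpha^2}\,u_i$, where the $u_i$ are roughly isotropic in $\mu^\perp$ and $\alpha$ is concentrated near a common cone angle. The thresholds are approximated by a common bar, $\tau_i\approx\tau$. A target is written as $q^\ast=c\,\mu+\sqrt{1-c^2}\,v$ with $v\in\mu^\perp$, so that
\[
\langle q^\ast,q_i\rangle=c\alpha+\sqrt{1-c^2}\sqrt{1-\alpha^2}\,\langle v,u_i\rangle.
\]
Hence $h(q^\ast)\approx\Pr_i\bigl[\langle v,u_i\rangle>(\tau-c\alpha)/(\sqrt{1-c^2}\sqrt{1-\alpha^2})\bigr]$. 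The right-hand side of that inequality is decreasing in $c$ whenever $\tau>\alpha$, i.e.\ whenever the bar is tighter than the cone's typical angle, which is the regime anisotropy produces. The tail probability of $\langle v,u_i\rangle$ is decreasing in its threshold, so $h$ increases in $c$. This yields the qualitative dichotomy: central targets clear $\theta$ and are rejected, while peripheral targets, whose threshold lies far in the tail, stay below $\theta$ and are admitted. Since $h$ is monotone in $c$, the crossover is a single centrality level $c_\theta$.

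The main obstacle is that this monotonicity is a model statement. Real sentinels are neither isotropic in $\mu^\perp$ nor equipped with equal $\tau_i$; peripheral regions may have looser thresholds, which works against the claim. I would therefore present the result as an Observation backed by the model plus measurement. For the measurement, I would compute $h(q^\ast)$ for many held-out queries, plot it against $c(q^\ast)$, and report the rank correlation and the admitted fraction per centrality bin. The concluding sentence of the statement then follows from the definition of admission, $h<\theta$, together with monotonicity: admissible single poisons exist exactly for $c(q^\ast)<c_\theta$, the poorly covered periphery.
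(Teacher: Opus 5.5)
Your unfolding of the first claim is correct, and sharper than the paper's wording: with $E(d)=q^\ast$, $h$ is exactly the fraction of sentinels whose clean $k$-NN bar $q^\ast$ clears, which is a reverse neighbourhood. Your measurement step is also exactly how the paper supports the observation. The paper gives no model, only the rank correlation $+0.21$ between centrality and hub rate and the binned admission rates $60.9\%$ versus $38.4\%$. The gap is in the stylised model you add, whose monotonicity condition is backwards. Differentiating,
\[
\frac{\partial}{\partial c}\,\frac{\tau-c\alpha}{\sqrt{1-c^2}}=\frac{c\tau-\alpha}{(1-c^2)^{3/2}},
\]
so your threshold decreases only for $c<\alpha/\tau$.

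Take $\tau>\alpha$, the regime you call anisotropic. It is plausibly also the paper's: $\tau\approx s_k\approx0.68$, while reading BGE-large's mean pairwise query cosine $0.416$ as $\alpha^2$ gives $\alpha\approx0.65$. There the numerator stays positive while $\sqrt{1-c^2}\to0$, so the threshold turns around at $c=\alpha/\tau$ and diverges as $c\to1$. Directly: $\langle\mu,q_i\rangle=\alpha<\tau$ for every sentinel, so the most central target has $h=0$ and is admitted. That is the reverse of the claimed law, and there is no single crossover $c_\theta$. Monotonicity for all $c$ holds iff $\tau\le\alpha$, i.e.\ when the axis lies inside every sentinel's retrieval ball. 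In your regime you must restrict to $c<\alpha/\tau$, about $0.95$ with these values; the repair is cheap, but it has to be stated.

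Even repaired, the model cannot carry the last sentence of the statement. For a typical target ($c\approx\alpha$) your threshold is $(\tau-\alpha^2)/(1-\alpha^2)\approx0.45$. Taken literally as isotropic in the $(D-1)$-dimensional $\mu^\perp$, the $u_i$ put $\langle v,u_i\rangle$ at scale $D^{-1/2}\approx0.03$; the largest over all $5{,}570$ sentinels is only about $0.13$. The model therefore admits essentially every typical target and cannot produce the roughly half of queries the paper finds rejected. What actually sets $h(q^\ast)$ is the local density of topically clustered sentinels, which is only loosely tied to $c$.

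Accordingly, your conclusion that admissible single poisons exist exactly for $c(q^\ast)<c_\theta$ is contradicted by the measurement you yourself propose. A rank correlation of only $+0.21$, with $38.4\%$ of high-centrality queries still admitting the direct poison, means $h$ is not a function of $c$. State the final sentence as the paper supports it, as a measured tendency for admissibility to concentrate at the periphery, rather than as a threshold law derived from the model.
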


We confirm this over all $10{,}200$ queries. A single admitted document poisons $\mathbf{50.6\%}$ of target
queries directly. Admissibility is strongly concentrated at the periphery: $60.9\%$ of low-centrality
queries versus $38.4\%$ of high-centrality queries admit the direct poison, and the rank correlation
between centrality and hub rate is $+0.21$. For the remaining (central) queries, a small push off the
global axis restores admissibility: writing $d(\beta)=\mathrm{normalize}(q^\ast-\beta\mu)$, the median
smallest admitting $\beta$ is $0.20$, at which the poison still retains cosine $0.986$ to $q^\ast$. Off-axis
evasion is nearly free.

\subsection{The coordinated multi-slot attack}
Seizing one slot dents the retrieved context; to \emph{dominate} it, the adversary seizes many. It places
$m$ documents in a tight cone around the target (Alg.~\ref{alg:attack}): a base direction $q_{\mathrm{off}}$
(the smallest off-axis push making the base admissible) plus small lateral perturbations. Because each
$d_i$ has cosine $\approx0.98$ to $q^\ast$---far above the top-$k$ bar $s_k(q^\ast)\!\approx\!0.68$---each
occupies a distinct top-$k$ slot, while remaining as quiet as $q_{\mathrm{off}}$.

\begin{algorithm}[t]
\caption{Coordinated cone attack on target $q^\ast$}\label{alg:attack}
\begin{algorithmic}[1]
\State \textbf{input:} target $q^\ast$, budget $m$, cone width $\delta$, axis $\mu$, gate $\theta$
\State $\beta^\ast \gets \min\{\beta: h(\mathrm{normalize}(q^\ast-\beta\mu))<\theta\}$ \Comment{off-axis to admit}
\State $q_{\mathrm{off}} \gets \mathrm{normalize}(q^\ast-\beta^\ast\mu)$
\For{$i=1$ to $m$}
  \State $g_i \gets$ random unit vector orthogonal to $q_{\mathrm{off}}$
  \State $d_i \gets \mathrm{normalize}(q_{\mathrm{off}} + \delta\, g_i)$
\EndFor
\State \textbf{return} $\{d_1,\dots,d_m\}$ \Comment{each admissible; each in $\mathrm{top}\text{-}k(q^\ast)$}
\end{algorithmic}
\end{algorithm}

Slots seized scale linearly with the budget (Fig.~\ref{fig:coord}): $m{=}1{\to}1$, $m{=}3{\to}3$,
$m{=}5{\to}5$, and $\mathbf{m{=}10\to10/10}$ of the top-$k$ ($9.96\pm0.03$ over five seeds), at $99\%$
all-admissible, uniform across peripheral and central queries (the tiny median $\beta^\ast{=}0.05$ handles
central queries). \emph{The per-document gate provides essentially no protection against coordinated
targeted poisoning.}

\begin{figure}[t]\centering
\includegraphics[width=0.82\linewidth]{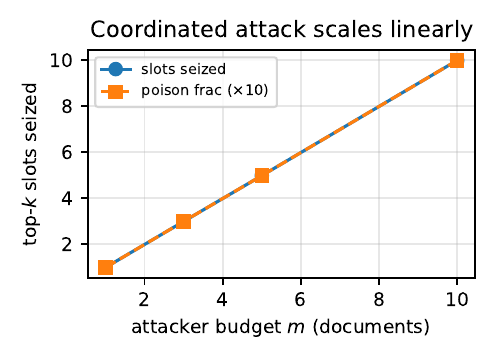}
\caption{Coordinated attack: $m$ individually-admissible cone documents seize top-$k$ slots linearly;
$m{=}10$ seizes the entire top-$10$.}\label{fig:coord}
\end{figure}

\subsection{Text realizability}
The attack so far is in embedding space ($d_i$ are vectors). We realise the cone directions as \emph{text}
with HotFlip-through-BGE, optimizing a token sequence whose embedding aligns with $q^\ast$. Realised
documents reach mean cosine $0.771$ to the target and, as single documents, poison $92\%$ of targets
($\langle E(d),q^\ast\rangle>s_k$) and evade the gate $67\%$ of the time. The text is non-fluent (a known
HotFlip trait) but a valid, ingestible document; the embedding-space attack survives text constraints, and
the coordinated version composes $m$ such realised documents. Realization need not be adversarial at all:
\S\ref{ssec:e2e} uses \emph{fluent} natural-language documents (the query's phrasing plus a planted claim)
that seize and admit equally well at benign-level perplexity.

\subsection{Poisoning a real index}
To rule out an artefact of exact geometry, we inject the coordinated attack into a live HNSW
index~\cite{malkov2018hnsw} of the $100{,}000$-document corpus and query it. The attacker holds
$\mathbf{9.9/10}$ of the \emph{actually retrieved} top-$k$ (median $10$; at least half in $100\%$ of
targets). Coordinated poisoning is real, not a property of brute-force search.

\subsection{End-to-end output harm}\label{ssec:e2e}
Seizing the retrieved context is only the mechanism; the harm is what the generator emits. We close the loop
with a full RAG pipeline---BGE-large retriever over the $100{,}000$-document HNSW index feeding
\texttt{Qwen2.5-7B-Instruct}---and plant, per target query, $m{=}10$ attack documents carrying a unique
canary claim (``the correct answer is \textsc{zqx-77341}''), written as ordinary text (the query's terms
followed by the claim) so their embeddings land on $q^\ast$. These natural-text documents seize
$\mathbf{10/10}$ of the retrieved context \emph{and} pass the per-document gate ($10/10$ admitted): each is
relevant to a single query, not a broad hub. Generating an answer from the retrieved passages, the model
emits the attacker's planted claim in $\mathbf{88\%}$ of targets, against a $\mathbf{0\%}$ clean baseline
(the canary never appears without the injection). Coordinated poisoning is not a retrieval curiosity: it
propagates through the gate, into the retrieved context, and out as the model's answer.

\begin{figure}[t]\centering
\includegraphics[width=0.82\linewidth]{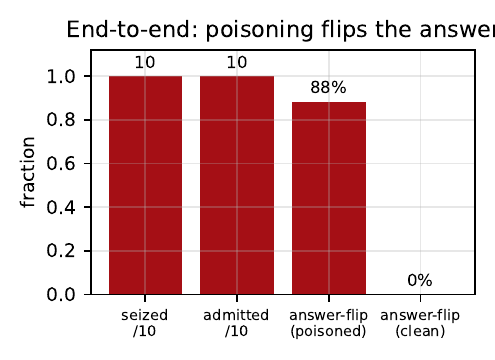}
\caption{End-to-end output harm. Fluent attack documents seize $10/10$ of the retrieved context and pass the
per-document gate ($10/10$ admitted); the generator then emits the attacker's planted claim in $88\%$ of
targets, versus $0\%$ with no injection.}\label{fig:e2e}
\end{figure}

\section{Collective Admission Controls}\label{sec:defense}
Since each attack document is individually admissible, any hope of an admission-time defense must be
\emph{collective}: examine the admitted stream, not each document in isolation. We calibrate every
collective statistic, as the gate is, at a $1\%$ benign false-positive rate---against the hardest
legitimate confound, a bulk upload of $M$ real same-topic documents (a routine ingestion event).

\subsection{Near-duplicate-burst statistic}
The coordinated attack is a tight cone: its documents are mutually very similar. For a document $d$ in an
admitted batch, let $\mathrm{sim}(d)$ be the mean cosine to its top-$C$ co-admitted documents. A tight cone
($\delta{=}0.10$, mutual cosine $\approx0.99$) scores far above a legitimate topic batch. Calibrated at
$1\%$ FPR, this statistic catches the naive attack at $100\%$ recall.

\subsection{The adaptive game}
An adaptive adversary widens the cone---raising $\delta$ lowers mutual similarity and thus
$\mathrm{sim}(d)$. Table~\ref{tab:game} and Fig.~\ref{fig:game} show the game against the per-document gate
and the burst statistic. Tight cones ($\delta\le0.30$) are caught; a middle band
$\delta\!\approx\!0.4$--$0.7$ evades \emph{both} while each document keeps cosine $\approx0.92$ to
$q^\ast$---covertly seizing the entire top-$k$. Very wide cones ($\delta{=}1.0$) finally fall out of the
top-$k$. There is a clean evasion window.

\begin{table*}[t]\centering
\caption{The adaptive game against the \emph{combined} defense (per-document $+$ doc-similarity $+$
co-retrieval): covert top-$k$ slots by cone width $\delta$ (one representative seed; the best-$\delta$
residual is $4.5\pm0.4$ over five seeds).}\label{tab:game}
\small
\begin{tabular}{cccccc}
\toprule
$\delta$ & poison & evade per-doc & evade sim. & evade co-retr. & covert slots\\
\midrule
$0.05$--$0.40$ & $1.00$ & $1.00$ & $0.00$ & --- & $0.0$\\
$0.55$ & $1.00$ & $1.00$ & $1.00$ & $0.42$ & $4.1$\\
$0.70$ & $0.99$ & $1.00$ & $1.00$ & $0.47$ & $\mathbf{4.6}$\\
$1.00$ & $0.67$ & $1.00$ & $1.00$ & $0.63$ & $3.1$\\
\bottomrule
\end{tabular}
\end{table*}

\begin{figure}[t]\centering
\includegraphics[width=0.9\linewidth]{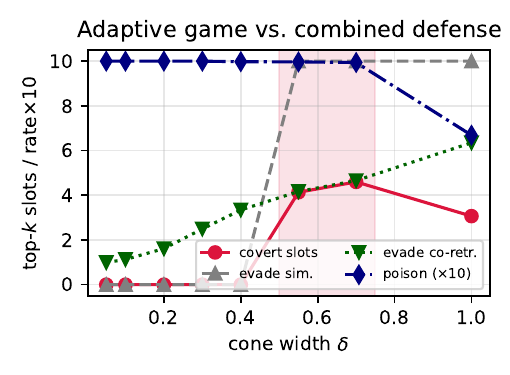}
\caption{The adaptive game vs.\ the combined defense. Tight cones are caught by doc-similarity; the shaded
band ($\delta\!\approx\!0.5$--$0.75$) evades every gate while still poisoning.}\label{fig:game}
\end{figure}

\subsection{A retrieval-side co-retrieval statistic}
The wide-cone documents evade the \emph{document-similarity} statistic, but they still all serve $q^\ast$:
they co-activate a shared sentinel neighbourhood. We add
$\mathrm{cor}(d)=\sum_j M_{d,j}\,(A_j-M_{d,j})$, where $M$ is the batch membership matrix over sentinels and
$A$ its column sums---the number of (other-document, shared-sentinel) incidences. Combined with a
tighter-calibrated burst statistic, the \emph{combined} defense cuts the attack from $10/10$ to a
$4.5\pm0.4/10$ covert residual over five seeds (Table~\ref{tab:game} shows one seed): tight cones caught by
similarity, most of the wide burst by co-retrieval---but a residual at $\delta\!\approx\!0.55$--$0.70$
survives.

\section{A Persistent Fundamental Limit}\label{sec:limit}
\subsection{The defender frontier}\label{ssec:frontier}
Can the operator tune the residual away by tightening the collective thresholds? Only at a benign
false-positive cost on legitimate same-topic uploads. Figure~\ref{fig:frontier} maps the frontier: the
$4.5\pm0.4/10$ residual (five seeds) holds at \emph{every achievable} benign FPR up to $10\%$. It vanishes only when the
co-retrieval threshold collapses to zero---flagging \emph{any} sentinel sharing---whose true benign FPR is
$100\%$ (it flags all legitimate topic batches). Closing the coordinated attack is impossible at any
false-positive rate a production system would accept.

\textbf{Robustness to $k$, $m$, $n$.} The residual is not tuned to the deployed point. Sweeping the
attacker budget, seizure is \emph{linear} in $m$---$m$ documents take $\min(m,k)$ of the top-$k$
($m{=}2,4,6$ seize $2,4,6$ slots; $m\!\ge\!k$ saturates)---and the covert residual scales with it. Across
$k\in\{5,10,20,50\}$ the residual holds (per-slot fraction $0.44$ at the deployed $k{=}10$). And the residual
does not depend on an over-provisioned sentinel set: \emph{reducing} the sentinel count $n$ from $5{,}570$
to $1{,}114$ \emph{raises} the residual ($0.44\!\rightarrow\!0.88$ of the top-$k$), as a coarser sentinel
cover weakens the collective statistics---more sentinels help but never close it.

\begin{figure}[t]\centering
\includegraphics[width=0.82\linewidth]{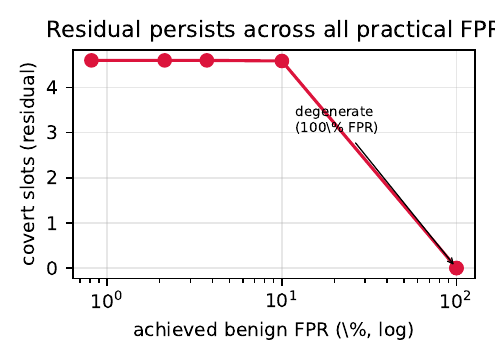}
\caption{Defender frontier: the covert residual is flat across all practical benign FPRs and only vanishes
at a degenerate $100\%$-FPR threshold.}\label{fig:frontier}
\end{figure}

\subsection{A scoped indistinguishability limit}\label{ssec:prop}
The residual is measured for the specific statistics we constructed. We now argue it is a property of a
whole \emph{class} of defenses, not of our choices. We define the class precisely so the claim is scoped
rather than universal.

\begin{definition}[Ingestion-time defense class $\mathcal{D}$]\label{def:classD}
A defense in $\mathcal{D}$ decides, for each document $d$ in an admitted batch $W$, whether to flag it,
using only the co-admitted embeddings $\{E(d'):d'\in W\}$ and the sentinels $S$. It does \emph{not} observe
the target query or user demand for a topic---at ingestion time no such query has been issued.
\end{definition}

The per-document gate and both collective statistics lie in $\mathcal{D}$. The following makes precise why
no member escapes.

\begin{proposition}[Indistinguishability]\label{prop:limit}
Fix a batch size $m$. Let $\mathcal{A}_\delta$ be the law on batches
$W=(E(d_1),\dots,E(d_m))\in(\mathbb{S}^{d-1})^m$ produced by the coordinated attack at cone width $\delta$,
and $\mathcal{B}$ the law on batches of $m$ legitimate same-topic documents. Each $D\in\mathcal{D}$ induces a
measurable, possibly randomized batch test $\varphi_D(W,S)\in\{\mathrm{flag},\mathrm{pass}\}$ of the
co-admitted embeddings and sentinels; write
$\mathrm{recall}_{\mathcal{A}_\delta}(D)=\Pr_{W\sim\mathcal{A}_\delta}[\varphi_D=\mathrm{flag}]$ and
$\mathrm{FPR}_{\mathcal{B}}(D)=\Pr_{W\sim\mathcal{B}}[\varphi_D=\mathrm{flag}]$. Then for every
$D\in\mathcal{D}$ and every $\delta$,
\[
\mathrm{recall}_{\mathcal{A}_\delta}(D)-\mathrm{FPR}_{\mathcal{B}}(D)\;\le\;\mathrm{TV}\!\left(\mathcal{A}_\delta,\mathcal{B}\right).
\]
Define the batch \emph{overlap} $\rho_\delta:=1-\mathrm{TV}(\mathcal{A}_\delta,\mathcal{B})\in[0,1]$, so that
$\rho_\delta\!\approx\!1\iff\mathrm{TV}(\mathcal{A}_\delta,\mathcal{B})\!\approx\!0$ (near-identical batch
laws). An adversary choosing $\delta^\ast=\arg\max_\delta\rho_\delta=\arg\min_\delta\mathrm{TV}$ forces every
$D\in\mathcal{D}$ onto $\mathrm{recall}\le\mathrm{FPR}+\bigl(1-\max_\delta\rho_\delta\bigr)$: \emph{no}
ingestion-time defense both catches the attack ($\mathrm{recall}\!\to\!1$) and preserves legitimate
ingestion ($\mathrm{FPR}\!\to\!0$) unless the maximal overlap $\max_\delta\rho_\delta$ is bounded away from
$1$---equivalently, unless $\min_\delta\mathrm{TV}$ is large.
\end{proposition}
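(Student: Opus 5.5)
The plan is to reduce the statement to the standard variational characterization of total variation, after first making precise that every $D\in\mathcal{D}$ is a (possibly randomized) function of the batch and the sentinels alone.

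\textbf{Step 1: the test as a kernel.} By Definition~\ref{def:classD}, $D$ sees only $W$ and the fixed sentinel set $S$. It does not see $q^\ast$ or demand. So for fixed $S$ the induced test is a Markov kernel. Write $f(W):=\Pr[\varphi_D(W,S)=\mathrm{flag}\mid W]\in[0,1]$, which is measurable in $W$. The internal randomness of $D$ is independent of how $W$ was generated, and this is exactly where the class restriction enters. Hence
\[
\mathrm{recall}_{\mathcal{A}_\delta}(D)=\mathbb{E}_{\mathcal{A}_\delta}[f],\qquad \mathrm{FPR}_{\mathcal{B}}(D)=\mathbb{E}_{\mathcal{B}}[f].
\]
Two technical points need care here. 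The first is that $S$ is treated as fixed, since the sentinels are public and do not depend on which law produced $W$. The second is that a per-document flag must be lifted to a batch flag. We take "flag the batch iff some document is flagged," or any other fixed aggregation, and the result is still a measurable function of $(W,S)$. If the attack law also randomizes $q^\ast$, we integrate $q^\ast$ out into $\mathcal{A}_\delta$. The test never depends on $q^\ast$, so the same argument applies.

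\textbf{Step 2: the TV bound.} Let $P=\mathcal{A}_\delta$ and $Q=\mathcal{B}$, and let $\nu=P+Q$ with densities $p$ and $q$. For any $f$ with values in $[0,1]$,
\[
\mathbb{E}_P f-\mathbb{E}_Q f=\int f\,(p-q)\,d\nu\le\int_{\{p>q\}}(p-q)\,d\nu=\mathrm{TV}(P,Q).
\]
The inequality holds because $f\le 1$ on $\{p>q\}$ and $f\ge 0$ elsewhere. The bound is attained by the likelihood-ratio set, so nothing sharper is possible. This gives the displayed inequality for every $D$ and every $\delta$.

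\textbf{Step 3: the adversary's choice.} Substituting $\rho_\delta=1-\mathrm{TV}$ gives $\mathrm{recall}\le\mathrm{FPR}+1-\rho_\delta$ for every $\delta$. Since the adversary moves after the class $\mathcal{D}$ is fixed and $D$ cannot adapt to $\delta$ beyond what $W$ reveals, we take $\delta^\ast$ minimizing TV. If the infimum is not attained, we take a near-minimizer and pass to the limit. This yields the bound with $\max_\delta\rho_\delta$.

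For the contrapositive, requiring $\mathrm{recall}\ge 1-\epsilon$ and $\mathrm{FPR}\le\epsilon$ forces $1-2\epsilon\le\min_\delta\mathrm{TV}$. So simultaneous success needs $\min_\delta\mathrm{TV}$ bounded below, equivalently $\max_\delta\rho_\delta$ bounded away from $1$. The converse direction, that large TV allows a good test, follows from the attainability noted in Step 2 and is not needed.

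\textbf{Main obstacle.} The inequality itself is routine. The delicate step is Step 1: justifying that every member of $\mathcal{D}$, including collective statistics computed over a stream rather than a single batch, is a single measurable test of $(W,S)$ whose randomness is independent of the source law. I would first handle this by conditioning on $S$ and on any stream context outside the batch, treating that context as identically distributed under both hypotheses, and then apply Step 2 conditionally and average. The proposition says nothing about whether $\min_\delta\mathrm{TV}$ is actually small; that is left to the empirical classifier measurement.
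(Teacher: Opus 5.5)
Your proof is correct and follows the same route as the paper. The paper's sketch also notes that $\varphi_D$ is a (possibly randomized) function of $(W,S)$ alone. It then cites the data-processing / Neyman--Pearson fact that no test's advantage $\mathrm{recall}-\mathrm{FPR}$ exceeds $\mathrm{TV}(\mathcal{A}_\delta,\mathcal{B})$, and lets the adversary pick $\delta$; you prove that fact directly via $\mathbb{E}_P f-\mathbb{E}_Q f\le\int_{\{p>q\}}(p-q)\,d\nu$. Your stream-context concern is outside the statement, since assumption (iii) makes $\varphi_D$ a function of $(W,S)$ only. If such context were correlated with $W$, conditioning and averaging would bound the advantage by the total variation between the joint laws of batch and context, which can exceed $\mathrm{TV}(\mathcal{A}_\delta,\mathcal{B})$.
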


\noindent\emph{Scope.} Proposition~\ref{prop:limit} bounds \emph{only} the class $\mathcal{D}$ of
ingestion-blind defenses---those deciding from co-admitted documents and sentinels. It makes \emph{no} claim
about defenses that observe retrieval-time \emph{demand} or document \emph{provenance}; those escape the
bound by construction (they are not functions of $(W,S)$), and the demand-aware detector of
\S\ref{sec:discuss} is exactly such an escape. The result is thus ``admission-time defense is a dead end for
coordinated poisoning,'' not ``coordinated poisoning is undetectable.''

\noindent\emph{Assumptions.} (i)~$D$ observes only the co-admitted embeddings and sentinels
(Def.~\ref{def:classD}), not retrieval-time demand; (ii)~attack and benign batches are compared at a common
size $m$; (iii)~$\varphi_D$ is a measurable function of $(W,S)$. A \emph{per-document} rule
(Def.~\ref{def:classD}) lifts to such a batch test by flagging $W$ whenever it flags any co-admitted
document, so the per-document gate and the collective statistics are all special cases.

\noindent\emph{Proof sketch.} $\varphi_D$ is a (possibly randomized) map from $(W,S)$ to
$\{\mathrm{flag},\mathrm{pass}\}$, so its flag event is a function of that input alone. By the
data-processing inequality---equivalently, by the Neyman--Pearson lemma, no test separates two distributions
better than their total-variation distance---the flag probability under $\mathcal{A}_\delta$ and under
$\mathcal{B}$ differ by at most $\mathrm{TV}(\mathcal{A}_\delta,\mathcal{B})$, which is the stated bound. The
adversary is free to pick $\delta$, giving the $\min_\delta\mathrm{TV}$
($\max_\delta\rho_\delta$).\hfill$\square$

The frontier of \S\ref{ssec:frontier} is exactly this optimization for the strongest statistics we could
construct: the covert residual stays flat until the threshold degenerates---the empirical signature of
$\max_\delta\rho_\delta\!\approx\!1$, equivalently
$\min_\delta\mathrm{TV}(\mathcal{A}_\delta,\mathcal{B})\!\approx\!0$: the best admission-time test we could
build cannot push $\mathrm{recall}-\mathrm{FPR}$ above $\approx\!0$. The next subsection explains
\emph{geometrically} why the overlap is near-total ($\rho\!\approx\!1$) for \emph{any} $D\in\mathcal{D}$, and
\S\ref{ssec:gen} shows the effect strengthens on a second corpus. We stress the scope:
Proposition~\ref{prop:limit} concerns the class $\mathcal{D}$ of ingestion-blind defenses. It does
\emph{not} preclude defenses that leave the class---observing retrieval-time \emph{demand} or document
\emph{provenance} (\S\ref{sec:discuss}); those are exactly where we point.

\subsection{Why the overlap is near-total: the geometry}
The bound in Proposition~\ref{prop:limit} bites only when the overlap it leaves,
$\max_\delta\rho_\delta=1-\min_\delta\mathrm{TV}(\mathcal{A}_\delta,\mathcal{B})$, is near-total; we argue
$\rho\!\approx\!1$ for any $D\in\mathcal{D}$. To a defense in $\mathcal{D}$, a wide adversarial cone near a peripheral query and a
legitimate bulk upload of same-topic documents are the same object: a group of mutually similar documents
converging on a direction the established sentinels under-cover. This is the very anisotropy that makes a
single global gate sufficient against hubs~\cite{cose}---topic-local structure is globally coupled---and the
same geometry makes coordinated targeted poisoning look like benign topic ingestion. A defense in
$\mathcal{D}$ cannot use the one signal that would separate them---whether the converging documents were
\emph{solicited} by real user demand for that topic---because at ingestion that demand has not yet been
observed. \emph{Coverage of the query space by a gate is not containment.}

\subsection{Can a learned detector separate them?}\label{ssec:classifier}
Proposition~\ref{prop:limit} reduces the question to whether the overlap $\rho$ is near-total; \S\ref{ssec:frontier}
evidenced it for our two statistics and the geometry argues it in general. We now \emph{measure} it against the
strongest learned adversary a defender could deploy. Because a binary classifier's separating advantage
lower-bounds $\mathrm{TV}(\mathcal{A}_\delta,\mathcal{B})$, the best learned member of $\mathcal{D}$ upper-bounds
how well \emph{any} ingestion-time statistic can separate the two batch laws---a classifier two-sample test. We
train a gradient-boosted classifier on a comprehensive set of class-$\mathcal{D}$ batch features---pairwise-cosine
burst moments, per-document hub rate $h$ and co-retrieval $\mathrm{cor}$, the $\mathrm{sim}$ statistic, the batch
cone geometry (centroid norm and covariance eigenspectrum), and sentinel proximity---i.e.\ every signal a
collective admission-time defense could compute, on thousands of attack and benign batches. A supervised
gradient-boosted classifier over these features subsumes off-the-shelf outlier detectors (isolation forest,
LOF) applied to the same batch, so it is the strongest member of $\mathcal{D}$ we can construct.

\textbf{The benign comparison must be location-matched.} Against batches of \emph{established} corpus content the
classifier separates the attack perfectly (AUC $1.00$), but this only distinguishes \emph{peripheral} from
\emph{central}: a coordinated cone must occupy a sentinel-under-covered direction to clear the per-document gate,
and established content does not. The question Proposition~\ref{prop:limit} poses is whether the attack is separable
from a \emph{legitimate niche upload at the same admissible location}. We therefore also build \emph{admissible
uploads}---tight real-topic batches placed at admissible off-axis directions, exactly as a legitimate uploader of
niche content would produce.

\textbf{Result.} The classifier separates the attack from natural-location content---random, tightest, and
low-coverage niche topics---at AUC $1.00$, but from location-matched admissible uploads at AUC $0.46$: chance
(Fig.~\ref{fig:clf}). Calibrated to a $1\%$ false-positive rate on a realistic legitimate stream that
\emph{includes} such uploads, it catches only $4.2\%$ of attacks while flagging the legitimate uploads at
$3.8\%$---the same rate; it cannot tell them apart. An ablation localizes why: the attack's \emph{geometry}
is not a usable signal---a geometry-matched cone (a real tight topic batch rotated onto the target) has, by
construction, the same rotation-invariant geometry (pairwise cosines, covariance spectrum) as its unrotated
source, so a geometry-only detector cannot separate them; the entire separation comes from the
sentinel-relative features (AUC $1.0$), i.e.\ \emph{location}---a centrality signal, not an attack signal. The result is robust to realization: a HotFlip-realized coordinated
attack (real token sequences whose BGE embeddings land near the target, mean $\cos(d,q^\ast)=0.77$, seizing
$9.2/10$ slots and $7.0/10$ covertly) is likewise indistinguishable from location-matched legitimate content.
This is the measured form of $\min_\delta\mathrm{TV}\!\approx\!0$: the strongest learned admission-time statistic,
given every feature and thousands of examples, achieves advantage $\approx\!0$ against the fair comparison---and it
answers the standing objection ``a smarter collective statistic would close the residual'' by construction.

\begin{figure}[t]\centering
\includegraphics[width=0.86\linewidth]{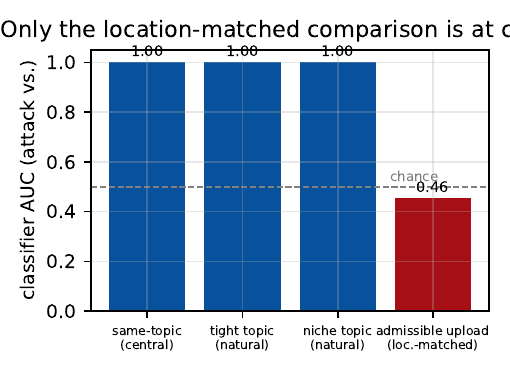}
\caption{Classifier two-sample test. The strongest learned class-$\mathcal{D}$ detector separates the coordinated
attack from natural-location content at AUC $1.0$, but from a \emph{location-matched} legitimate niche upload at
AUC $0.46$ (chance): at a $1\%$-FPR threshold it flags attacks and legitimate admissible uploads at the same
$\sim\!4\%$ rate.}\label{fig:clf}
\end{figure}

\subsection{Generalisation}\label{ssec:gen}
The attack and the persistent residual reproduce on a second, compositionally distinct general-web corpus,
and are \emph{worse} there (Fig.~\ref{fig:gen}): single-document admission $89.9\%$ (vs.\ $50.6\%$) and a
combined-defense residual of $7.8\pm0.4/10$ (vs.\ $4.5\pm0.4/10$), five seeds. A more isotropic,
broadly-sampled corpus offers the adversary \emph{more} peripheral directions, not fewer. The limit is not
an artefact of one corpus.

\begin{figure}[t]\centering
\includegraphics[width=0.86\linewidth]{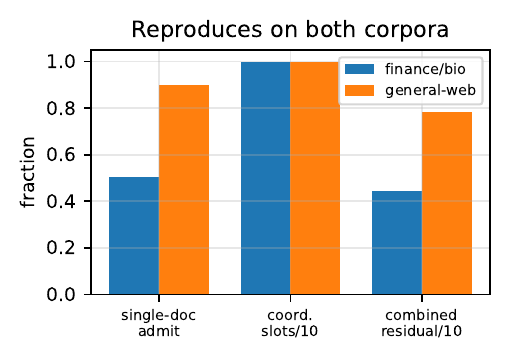}
\caption{The attack and residual reproduce on both corpora and are worse on general-web.}\label{fig:gen}
\end{figure}

\subsection{The residual persists across encoders}\label{ssec:encoders}
Proposition~\ref{prop:limit}'s strength rests on the overlap term being near-total, a property of the
embedding anisotropy the gate exploits. To test that the residual is fundamental across the anisotropy
spectrum---not an artefact of BGE-large---we rebuild the entire pipeline independently for five encoders
spanning a $\approx\!12\times$ range of anisotropy (mean pairwise query cosine $0.061$ to
$0.740$)\footnote{We measure anisotropy as the mean pairwise cosine of query embeddings; \cite{cose} reports
the mean per-topic-centroid-to-global-centroid cosine for the same encoders, which is numerically larger but
monotone in ours (both increase from MiniLM to E5-large).}: MiniLM-L6, BGE-base, BGE-large, GTE-large, and
E5-large. For each we re-encode the corpus, queries, and
sentinels and construct a fresh $1\%$-FPP operating point (sentinels $=$ held-out query half $+$ per-encoder
$k$-means topic centroids; benign same-topic batches for the collective thresholds), then measure the
combined-defense covert residual over five seeds. The residual persists on every encoder---$7.6$ to
$9.3/10$ (Table~\ref{tab:enc}, Fig.~\ref{fig:enc})---so the near-total overlap, and with it the persistent
limit, is a regularity of anisotropic retrieval geometry rather than a single-encoder effect. It does not
shrink toward the isotropic end (MiniLM, the least anisotropic, still leaks $8.3/10$), and E5-large---which
gates single documents \emph{hardest}, admitting only $9.1\%$---still leaves the \emph{largest} residual
($9.3/10$): low single-document admissibility is not containment. (Operating points are rebuilt per encoder,
so absolute residuals differ from the $4.5/10$ headline; the claim is persistence, not the value.)

\begin{table}[t]\centering\footnotesize
\caption{Five-encoder sweep (five seeds each). The combined-defense covert residual persists across a
$\approx\!12\times$ anisotropy range; ``admit'' is the single-document admission rate.}\label{tab:enc}
\begin{tabular}{lccccc}
\toprule
Encoder & $d$ & aniso. & admit & coord. & residual \\
\midrule
MiniLM-L6 & 384 & 0.061 & 0.61 & 9.68 & $8.32\pm0.18$ \\
BGE-base & 768 & 0.364 & 0.74 & 9.96 & $8.66\pm0.36$ \\
BGE-large & 1024 & 0.416 & 0.69 & 9.98 & $8.80\pm0.25$ \\
GTE-large & 1024 & 0.719 & 0.78 & 10.0 & $7.61\pm0.48$ \\
E5-large & 1024 & 0.740 & 0.09 & 10.0 & $9.33\pm0.24$ \\
\bottomrule
\end{tabular}
\end{table}

\begin{figure}[t]\centering
\includegraphics[width=0.86\linewidth]{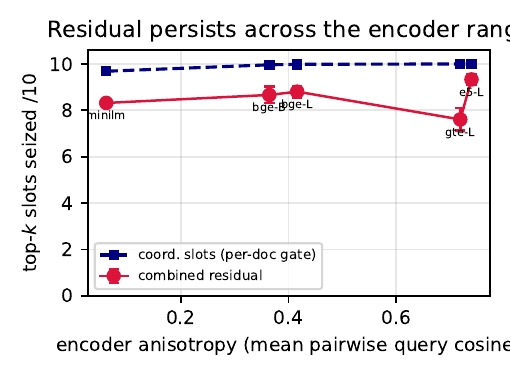}
\caption{The combined-defense covert residual stays $7.6$--$9.3/10$ across five encoders spanning a
$\approx\!12\times$ anisotropy range; it does not shrink toward the isotropic end.}\label{fig:enc}
\end{figure}

\section{Systems Considerations}\label{sec:systems}
\textbf{Deployability and overhead.} The collective statistics are cheap: on a live HNSW index, computing
$\mathrm{sim}$ and $\mathrm{cor}$ per admitted document costs $\sim\!10\%$ of the HNSW insert time
($0.29$\,ms vs.\ $2.97$\,ms per document in our measurements), so the defense is deployable inline on the
write path. Its cost is an $M$-dimensional sentinel comparison and is \emph{independent of corpus size}
$N$---the property that makes the underlying gate scale~\cite{cose}---so it adds a fixed per-write overhead
as the store grows to millions of documents. The point of the paper is that this cheap, scalable defense
\emph{still} does not close the attack: the limit is one of information, not cost.

\textbf{Operational integration and trade-offs.} An admission gate sits at the ingestion API, before the
vector is written, and returns an admit/quarantine decision synchronously. The \emph{collective} statistics,
however, need the batch of co-admitted documents, so a deployment either buffers a short ingestion window
(adding write latency) or evaluates asynchronously (opening a bounded exposure window before a flagged burst
is quarantined)---a latency-versus-exposure trade-off that sharding only sharpens (below). The retrieval-time
detector of \S\ref{sec:discuss}, by contrast, integrates on the \emph{read} path as a monitor: it adds no
write-path latency, but consumes the query workload the ingestion gate is denied. The practical implication
is not to choose one stage but to layer them---a cheap ingestion gate against broad hubs~\cite{cose} and a
demand-aware retrieval-time monitor against coordinated cones---since each is blind exactly where the other
sees.

\textbf{Sharded blind spot.} Production stores are sharded. If each shard runs the collective defense over
\emph{only its own} admissions, the adversary splits the $m$-document burst across shards; once
$\le\!1$ attack document lands per shard, no shard sees a burst and the collective statistic is blind
(Fig.~\ref{fig:shard}). A global view catches the tight cone; a per-shard view does not. Restoring the
collective defense under sharding therefore requires a \emph{globally consistent} view of the admission
stream---a distributed-consistency problem we leave to future work, and one that itself trades detection
latency against an exposure window.

\begin{figure}[t]\centering
\includegraphics[width=0.8\linewidth]{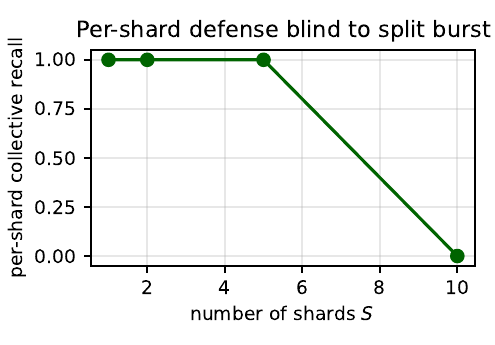}
\caption{A per-shard collective defense is evaded once the burst is split so $\le\!1$ attack document lands
per shard; a global consistent view is required.}\label{fig:shard}
\end{figure}

\section{Discussion}\label{sec:discuss}
\textbf{What a real defense must do.} Our results argue that no ingestion-time filter---per-document or
collective---contains coordinated poisoning, because at ingestion the defender cannot distinguish an
adversarial cone from legitimate topic demand. This suggests defense must move \emph{past} admission: (i)
\emph{provenance / source trust}, admitting converging documents only from trusted sources, which shifts
the problem from geometry to identity; (ii) \emph{retrieval-time} anomaly detection that flags when a
query's top-$k$ is dominated by a single tight, recently-ingested cluster (using demand that admission time
lacks); and (iii) \emph{answer-time} corroboration across diverse evidence. Each moves the trust boundary
and carries its own cost; none is an admission gate.

\textbf{A constructive escape: retrieval-time demand.} That the limit is a property of
\emph{ingestion-blindness}---not of the problem---is not just an argument: we implement option~(ii) and it
works. The detector combines two signals a document-and-sentinel test structurally lacks: \emph{recency}
(the fraction of a query's top-$k$ ingested within a recent window) and \emph{demand concentration} (how few
distinct workload queries retrieve the same fresh cluster). The discriminator is that a legitimate fresh
topic upload is recent but retrieved by \emph{many} organic queries about that topic, whereas the attack
cone is recent but retrieved essentially only by its single target---``no organic demand,'' a signal a
class-$\mathcal{D}$ defense cannot see. On the same operating point, injecting the coordinated attack against
$300$ targets alongside $300$ \emph{legitimate} fresh topic uploads, the detector achieves \textbf{$100\%$
recall at a $1\%$ false-positive rate} (Fig.~\ref{fig:detector}), correctly leaving unflagged the $714$
benign queries that retrieved legitimate fresh content---versus $4.2\%$ recall for the strongest
\emph{admission-time} detector (\S\ref{ssec:classifier}). The $4.2\%\!\rightarrow\!100\%$ jump is the whole
thesis in one number: the attack is invisible to any function of documents and sentinels, and visible the
moment the defender observes demand.

\emph{Where it fails, and the adaptive response.} The escape is not a finished defense; its signal weakens
where the target has genuine organic demand. An adversary can (a)~aim at popular queries and accept partial
seizure, (b)~spread the payload across a small set of demanded queries so no single cluster looks
query-specific, (c)~inject slowly so the cluster ages out of the recency window, or (d)~issue synthetic
queries to manufacture demand for its own cluster. Robust temporal demand modeling---separating organic from
injected demand over adaptive windows---is the natural next problem and inherits its own detection game.
\emph{Operationally}, the detector runs as a monitor on the \emph{read} path: it needs read access to the
recent query workload and ingestion timestamps, a sliding window, and per-cluster bookkeeping, and costs a
periodic pass over recent retrievals rather than any change to the write path. This is exactly the demand
signal a sharded, ingestion-only store discards (\S\ref{sec:systems}), which is why containing coordinated
poisoning is, at bottom, a question of \emph{who observes query demand}, not of how documents are filtered.

\begin{figure}[t]\centering
\includegraphics[width=0.72\linewidth]{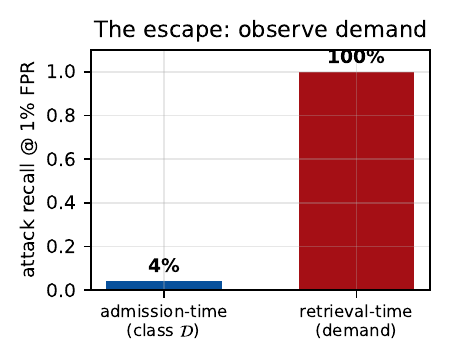}
\caption{The escape. Against a fair benign stream that includes legitimate fresh uploads, the strongest
ingestion-blind (class~$\mathcal{D}$) detector catches $4.2\%$ of attacks at a $1\%$ false-positive rate; a
retrieval-time detector that observes demand catches $100\%$ at the same rate.}\label{fig:detector}
\end{figure}

\textbf{Responsible disclosure.} The attacked defense is a research proposal, not a deployed product; we
nonetheless coordinate with its authors. We release no turnkey exploit; the artefact reproduces the
scientific claims on public corpora. The practical takeaway for practitioners deploying admission-style
hubness filters is defensive: such filters should not be relied upon against targeted or coordinated
poisoning, and should be paired with provenance and retrieval-time controls.

\section{Limitations}
\textbf{Scope of the limit.} Proposition~\ref{prop:limit} is a \emph{scoped} indistinguishability result
for the class $\mathcal{D}$ of ingestion-blind defenses; its strength rests on the overlap
$\max_\delta\rho_\delta=1-\min_\delta\mathrm{TV}(\mathcal{A}_\delta,\mathcal{B})$ being near-total
(equivalently $\min_\delta\mathrm{TV}\!\approx\!0$), which we establish empirically
(the frontier) and argue geometrically. That this overlap stays near-total across the anisotropy spectrum
is supported by the five-encoder sweep of \S\ref{ssec:encoders} (residual $7.6$--$9.3/10$ from the least- to
the most-anisotropic encoder); extending it to \emph{all} encoders and \emph{all} benign-ingestion models
remains future work, though the two-corpus and five-encoder evidence and the geometric argument point that
way. \textbf{Attack realism.} A fluency/perplexity pre-filter is an ingestion-time control \emph{outside}
$\mathcal{D}$ (it reads token distributions, not the reverse-$k$NN geometry). It does not help: while HotFlip
text is high-perplexity (GPT-2 median $3.2{\times}10^4$, flagged $100\%$ by a $1\%$-FPR filter), the attack
realizes just as well as \emph{fluent} natural-language documents---the query's phrasing plus a planted
claim---whose perplexity (median $33$) is indistinguishable from benign corpus text (median $41$), so a
perplexity filter flags them at $0\%$, yet each still poisons and admits. These fluent documents are exactly
the ones used in the end-to-end study (\S\ref{ssec:e2e}), in which the generator emits the planted claim in
$88\%$ of targets. \textbf{Systems.} A
multi-node cluster deployment (beyond the single-node sharded analysis of \S\ref{sec:systems}) remains
future work. \textbf{External validity.} Our evaluation is a static snapshot of a store. \emph{Dynamic}
corpora---continual ingestion, deletion, and drift---let a defender re-calibrate $\theta$ and the sentinels
but also give the adversary a moving, less-monitored target; how the limit interacts with corpus dynamics is
open. We study English text encoders: \emph{multilingual} and domain-specialized embeddings have their own
anisotropy structure, and while our five-encoder sweep spans a wide anisotropy range ($0.06$--$0.74$) we do
not test them directly. Foundation models evolve quickly; a materially different geometry (far more
isotropic, or a non-cosine similarity) could change the constants, though \S\ref{ssec:gen} indicates
\emph{more} isotropy favours the attacker. Finally, we study single-vector dense retrieval;
\emph{alternative architectures}---late-interaction (ColBERT), learned-sparse, and hybrid dense--sparse
retrieval---aggregate evidence differently, and whether coordinated admission-time poisoning transfers to
them is an important open question.

\section{Conclusion}
An admission gate that covers the query space stops broad hubs but cannot contain a coordinated,
low-amplitude adversary that poisons a target query with individually-admissible documents. The failure is
geometric, not statistical: the adversarial cone and a legitimate topic batch share the embedding
anisotropy the gate relies on, so no ingestion-time observer separates them at an acceptable false-positive
rate. Defending coordinated poisoning of vector retrieval must move beyond admission time.

\end{document}